\newlength{\actualtopmargin}
\newlength{\actualsidemargin}
  \theoremstyle{plain}
  \newtheorem{theorem}{Theorem}
  \theoremstyle{definition}
  \newtheorem{definition}[theorem]{Definition}
  \theoremstyle{remark}
  \theoremstyle{plain}
  \newtheorem*{theorem*}{Theorem}
  \newtheorem*{lemma*}{Lemma}
  \newtheorem*{corollary*}{Corollary}
  \newtheorem*{proposition*}{Proposition}
  \newtheorem*{claim*}{Claim}
\newenvironment{step}
  {
    \begin{enumerate}

  }
  {\end{enumerate}}
\newenvironment{algorithm*}[1]
  {
    \begin{center}
      \hrulefill\\
      \textbf{#1}
  }
  {
    \vspace{-1\baselineskip}
    \hrulefill
    \end{center}
  }
\newenvironment{protocol*}[1]
  {
    \begin{center}
      \hrulefill\\
      \textbf{#1}
  }
  {
    \vspace{-1\baselineskip}
    \hrulefill
    \end{center}
  }
\newcommand{\bbN}{\mathbb{N}}
\newcommand{\bbZ}{\mathbb{Z}}
\newcommand{\sfB}{\mathsf{B}}
\newcommand{\sfO}{\mathsf{O}}
\newcommand{\sfR}{\mathsf{R}}
\newcommand{\sfS}{\mathsf{S}}
\newcommand{\classfont}{\mathrm}
\newcommand{\BQP}{\classfont{BQP}}
\newcommand{\MA}{\classfont{MA}}
\newcommand{\QMA}{\classfont{QMA}}
\newcommand{\BQNP}{\classfont{BQNP}}
\newcommand{\QCMA}{\classfont{QCMA}}
\newcommand{\MQA}{\classfont{MQA}}
\newcommand{\ZQEXP}{\classfont{ZQEXP}}
\newcommand{\bra}[1]{\langle #1 \vert}
\newcommand{\ket}[1]{\vert #1 \rangle}
\newcommand{\ketbra}[1]{\vert #1 \rangle \langle #1 \vert}
\newcommand{\conjugate}[1]{#1^{\dagger}}
\newcommand{\bignorm}[1]{\bigl\Vert #1 \bigr\Vert}
\newcommand{\abs}[1]{\vert #1 \vert}
\newcommand{\function}[3]{{#1 \colon #2 \to #3}}
\newcommand{\init}{\mathrm{init}}
\newcommand{\illegal}{\mathrm{illegal}}
\newcommand{\acc}{\mathrm{acc}}
\newcommand{\rej}{\mathrm{rej}}
\newcommand{\yes}{\mathrm{yes}}
\newcommand{\no}{\mathrm{no}}
\newcommand{\Ayes}{A_{\yes}}
\newcommand{\Ano}{A_{\no}}
\newcommand{\Natural}{\bbN}
\newcommand{\Integers}{\bbZ}
\newcommand{\Nonnegative}{\Integers^{+}}
\newcommand{\Binary}{{\{ 0, 1 \}}}
\newcommand{\ignore}[1]{}
\begin{document}

\sloppy

% ---------------------------------------------------------------------------
%   Title page
% ---------------------------------------------------------------------------

\title{\Large
  \textbf{
     Achieving Perfect Completeness in Classical-Witness Quantum Merlin-Arthur Proof Systems
%%    Achieving Perfect Completeness in Classical-Witness QMA
  }\\
}

\author{
  Stephen P. Jordan\footnotemark[1]~~\footnotemark[5]\\
%%   \texttt{stephen.jordan@nist.gov}
  \and
  Hirotada Kobayashi\footnotemark[2]\\
%%   \texttt{hirotada@nii.ac.jp}
  \and
  Daniel Nagaj\footnotemark[3]\\
%%   \texttt{daniel.nagaj@savba.sk}
  \and
  Harumichi Nishimura\footnotemark[4]\\
%%   \texttt{hnishimura@mi.s.osakafu-u.ac.jp}
}

\date{}
%% \date{16 December 2011}

\maketitle
\thispagestyle{empty}
\pagestyle{plain}
\setcounter{page}{0}

\renewcommand{\thefootnote}{\fnsymbol{footnote}}

\vspace{-5mm}

\begin{center}
{\large
  \footnotemark[1]%
  Applied and Computational Mathematics Division\\
  Information Technology Laboratory\\
  National Institute of Standards and Technology\\
  Gaithersburg, MD, USA\\
%%   100 Bureau Drive, Stop 8910, Gaithersburg, MD 20899-8910, USA\\
  [2.5mm]
  \footnotemark[2]%
  Principles of Informatics Research Division\\
  National Institute of Informatics\\
  Tokyo, Japan\\
%%   2-1-2 Hitotsubashi, Chiyoda, Tokyo 101-8430, Japan\\
  [2.5mm]
  \footnotemark[3]%
  Research Center for Quantum Information\\
  Institute of Physics\\
  Slovak Academy of Sciences\\
  Bratislava, Slovakia\\
%%   D\'ubravsk\'a cesta 9, 845 11 Bratislava, Slovakia\\
  [2.5mm]
  \footnotemark[4]%
  Department of Mathematics and Information Sciences\\
  Graduate School of Science\\
  Osaka Prefecture University\\
  Sakai, Osaka, Japan
%%   1-1 Gakuen-cho, Naka, Sakai, Osaka 599-8531, Japan
}\\
[5mm]
%{\large 16 December 2011} \\
%[8mm]
\end{center}

\footnotetext[5]{
  Part of this work was done while at the Institute for Quantum Information,
  California Institute of Technology, Pasadena.
%%   Much of this work was done while at the Institute for Quantum Information,
%%   California Institute of Technology, Pasadena.
}

\renewcommand{\thefootnote}{\arabic{footnote}}

% ---------------------------------------------------------------------------
%   Abstract
% ---------------------------------------------------------------------------

\begin{abstract}
This paper proves that classical-witness quantum Merlin-Arthur proof systems
can achieve perfect completeness. That is, ${\QCMA = \QCMA_1}$. This
holds under any gate set with which the Hadamard and arbitrary classical
reversible transformations can be exactly implemented, \emph{e.g.},
${\{\textrm{Hadamard, Toffoli, NOT}\}}$. The proof is quantumly
nonrelativizing, and uses a simple but novel quantum technique that
\emph{additively} adjusts the success probability, which may be of
independent interest.
\end{abstract}

\clearpage

%%% Main Part

% ---------------------------------------------------------------------------
%   Introduction
% ---------------------------------------------------------------------------

\section{Introduction}

$\QCMA$ (also called $\MQA$~\cite{Wat09ECSS,GSU11arXiv}) was first formally\footnote{
  The general notion of
  nondeterminism in quantum computation originates much
  earlier due to Knill~\cite{Kni96TR}.}
%%   earlier~\cite{Kni96TR}.}
defined by Aharonov~and~Naveh~\cite{AhaNav02arXiv}
%% defined in Ref.~\cite{AhaNav02arXiv}
as the class of decision problems whose solutions
(given as classical bit strings)
can be efficiently verified by a quantum computer.
The letters ``MA'' stand for Merlin-Arthur, as the complexity class is
motivated by the following protocol.
A bit string~$w$ (the purported witness) is provided by 
a computationally unbounded but untrustworthy prover (Merlin) to a
verifier with only polynomial resources (Arthur). 
The verification procedure of the verifier is a polynomial time
\emph{quantum} computation while the witness~$w$ is a \emph{classical} bit string. 
If the verifier is a polynomial-time classical computer, 
then the resulting class is called $\MA$~\cite{Bab85STOC,BM88}. 
If the verifier and witness are both quantum, that is, $w$
is an arbitrary quantum state,
%% the resulting complexity class is called $\QMA$~\cite{Wat00FOCS}.
the resulting complexity class is called $\QMA$~\cite{Wat00FOCS}
(originally called $\BQNP$~\cite{Kit99AQIP,KitSheVya02Book}).

The standard way of defining these complexity classes allows
two-sided bounded error: Arthur may wrongly reject each yes-instance
with small probability (completeness error), and may also wrongly
accept each no-instance with small probability (soundness error). If
Arthur never wrongly rejects yes-instances, the system is said to
have \emph{perfect completeness}. The versions of $\QMA$, $\QCMA$, and $\MA$
with perfect completeness are denoted $\QMA_1$, $\QCMA_1$, and $\MA_1$,
respectively.

One of the important open problems in quantum Merlin-Arthur proofs
(both in the $\QCMA$ case and in the $\QMA$ case)
is whether the class defined with two-sided error
equals that with perfect completeness. A proof that
${\QMA = \QMA_1}$ would be particularly interesting, as the problem of
deciding whether a Hamiltonian is frustrated is
$\QMA_1$-complete~\cite{Bra06arXiv}.
Classically, it is known that ${\MA = \MA_1}$ due to Zachos~and~F\"urer~\cite{ZacFur87FSTTCS}
(Goldreich~and~Zuckerman~\cite{GolZuc11LNCS} provided an alternative proof of this).
More generally, it is known that perfect completeness is achievable
in various models of quantum and classical interactive proof systems~\cite{ZacFur87FSTTCS,GolManSip87FOCS,FurGolManSipZac89ACR,BenGolKilWig88STOC,KitWat00STOC,MarWat05CC,KemKobMatVid09CC}.
In contrast, Aaronson~\cite{Aar09QIC} presented
a \emph{quantum} oracle relative to which $\QMA_1$ is a proper
%% a \emph{quantum} oracle relative to which the class~$\QMA_1$ is a proper
subclass of $\QMA$. This implies that any proof of ${\QMA = \QMA_1}$
must be quantumly nonrelativizing. Aaronson's oracle also separates
$\QCMA$ from $\QCMA_1$ since in fact he showed a quantum oracle 
relative to which $\BQP$ is not contained in the exponential-time analogue of $\QMA_1$. 
He suggested that the result (and the proof) in Ref.~\cite{Aar09QIC} implies that 
any proof of ${\QMA = \QMA_1}$ (and also ${\QCMA = \QCMA_1}$) requires 
some technique of explicitly representing (probability) amplitudes that appear 
in quantum states or evolutions. 

This paper shows that demanding perfect completeness does \emph{not}
weaken the power of QCMA proof systems under a reasonable
assumption on the gate set. Specifically, assuming that 
Hadamard transformations and all classical reversible transformations can be
exactly implemented, ${\QCMA = \QCMA_1}$. To the best of our knowledge,
this is the first ``nontrivial'' example that overcomes a quantum oracle
separation (except quantumly nonrelativizing ``trivial'' containments 
such as ${\BQP \subseteq \ZQEXP}$ as found in Ref.~\cite{Aar09QIC}). 
Our proof of ${\QCMA = \QCMA_1}$ is nonrelativizing 
since our technique also utilizes an explicit representation of amplitudes. 
This suggests that the oracle separation of Ref.~\cite{Aar09QIC} may not be 
an insurmountable barrier to proving ${\QMA = \QMA_1}$. 
We hope that our proof may provide guidance on approaching the
longstanding $\QMA$ versus $\QMA_1$ problem, and on developing
quantumly nonrelativizing techniques in general. It is also
interesting to note that, as a corollary of our result, the solutions
to the known $\QCMA$-complete problems~\cite{Wocjan_Janzing_Beth} can
be verified with perfect completeness. 

Our basic strategy to prove ${\QCMA = \QCMA_1}$ is very simple:
Given any QCMA proof system with two-sided error, one considers letting
Arthur receive a description of the acceptance probability in addition
to the original classical witness.
This allows Arthur to adjust the acceptance probability
by standard exact amplitude amplification~\cite{BraHoyMosTap02ConM,CK98}
or Watrous's quantum rewinding~\cite{Wat09SIComp}.
%% by standard methods such as
%% exact amplitude amplification~\cite{BraHoyMosTap02ConM,CK98}
%% and quantum rewinding~\cite{Wat09SIComp,KemKobMatVid09CC}.
One obvious problem in this approach is that the original acceptance probability
might not be expressible exactly with polynomially many bits.
This can be overcome by making use of the robustness of
the two-sided error complexity class~$\QCMA$ against the choice of gate set.
Specifically, one can assume without loss of generality that
the verification procedure of the original two-sided error QCMA system
is implemented only with Hadamard, Toffoli, and NOT gates~\cite{Shi_Toffoli,Aharonov_Toffoli}.
This ensures that any possible acceptance probability on input~$x$
in this system is exactly equal to $k/2^{l(\abs{x})}$ for some integer~$k$
and some polynomially bounded, integer-valued function~$l$.

Another problem, which is more difficult to overcome, is that Arthur
may not be able to appropriately adjust the acceptance probability
without error, even if he knows the original acceptance probability.
The standard way to adjust success probability in exact amplitude
amplification is a ``multiplicative'' method that applies some
suitable rotation operator. This rotation depends on the input length,
and cannot be exactly implemented with a fixed finite gate set in general.
We overcome this difficulty by introducing a simple but novel
``additive'' method of adjusting the acceptance probability.
The goal is to have a base procedure whose initial acceptance probability
is exactly $1/2$, which leads to a protocol with perfect completeness
via Watrous's quantum rewinding.
(The choice of quantum rewinding rather than exact amplitude amplification is
just for ease of analysis, and is not essential.)

On input $x$, Arthur receives as a witness a string $w$ and an
integer $k$, written using $l(\abs{x})$ bits, where $w$ is expected to
be the witness he would receive in the original system, and $k$ is
expected such that $k/2^{l(\abs{x})}$ equals 
the acceptance probability $p_{x,w}$ on input $x$ and witness $w$ in
the original system.  If the claimed $k$ is too small relative to the
value computed from the original completeness condition, Arthur
rejects.  Otherwise Arthur performs with equal amplitude the original
verification test and an additional second test, where Arthur
generates a uniform superposition of values from $1$ to
$2^{l(\abs{x})}$ and simply accepts if this value is more than
$k$. Notice that this second test is exactly implementable only with
the Hadamard and classical reversible transformations.  Clearly, the
honest Merlin can prepare some suitable pair $(w,k)$ with which Arthur
accepts with probability $p_{x,w}$ in the original verification test
and with probability ${1 - k/2^{l(\abs{x})} = 1 - p_{x,w}}$ in the
second test.  Hence, this base procedure has its initial success
probability exactly $1/2$ for yes-instances, and one can construct a
system of perfect completeness via quantum rewinding, similar to the
case of quantum multi-prover interactive
proofs~\cite{KemKobMatVid09CC}.  For a dishonest Merlin, any possible $w$
must have a small $p_{x,w}$ value while $k$ must be such that
the value $k/2^{l(\abs{x})}$ is large, and thus, whichever pair
$(w,k)$ is prepared, the initial success probability of the base
procedure must be less than $1/2$, which ensures soundness. To the
best of our knowledge, no such ``additive'' method of amplitude
adjustment has appeared in the literature previously, and we believe
it may have other applications in quantum complexity theory.

% ---------------------------------------------------------------------------
%   Preliminaries
% ---------------------------------------------------------------------------

\section{Preliminaries}
\label{Section: Preliminaries}

We assume the reader is familiar with the quantum formalism, in
particular the quantum circuit model (see
Refs.~\cite{NieChu00Book,KitSheVya02Book}, for instance). Throughout
this paper, let $\Natural$ and $\Nonnegative$ denote the sets of
positive and nonnegative integers, respectively.  A
function~$\function{f}{\Nonnegative}{\Natural}$ is \emph{polynomially
  bounded} if there exists a polynomial-time deterministic Turing
machine that outputs ${1^{f(n)}}$ on input~$1^n$.  A
function~$\function{f}{\Nonnegative}{[0,1]}$ is \emph{negligible} if,
for every polynomially bounded
function~$\function{g}{\Nonnegative}{\Natural}$, it holds that ${f(n)
  < 1/g(n)}$ for all but finitely many values of $n$.

For a quantum register~$\sfR$, let $\ket{0}_{\sfR}$ denote the state
in which all the qubits in $\sfR$ are in state~$\ket{0}$.
In this paper, all Hilbert spaces have dimension a power of two.

% ---------------------------------------------------------------------------
%   Polynomial-Time Uniformly Generated Families of Quantum Circuits
% ---------------------------------------------------------------------------

\paragraph{Polynomial-Time Uniformly Generated Families of Quantum Circuits}
%% \subsection{Polynomial-Time Uniformly Generated Families of Quantum Circuits}
%% \label{Subsection: Uniform QC}

Following conventions,
we define quantum Merlin-Arthur proof systems
in terms of quantum circuits.
In particular, we use the following notion of
polynomial-time uniformly generated families of quantum circuits.

A family~${\{ Q_x \}}$ of quantum circuits is
\emph{polynomial-time uniformly generated}
if there exists a deterministic procedure
that, on every input~$x$, outputs a description of $Q_x$
and runs in time polynomial in $\abs{x}$.
It is assumed that the circuits in such a family are composed of gates
in some reasonable, universal, finite set of quantum gates.
Furthermore, it is assumed that the number of gates in any circuit
is not more than the length of the description of that circuit.
Therefore $Q_x$ must have size polynomial in $\abs{x}$.
For convenience,
we may identify a circuit~$Q_x$ with the unitary operator it induces.

Throughout this paper, we assume a gate set with which the Hadamard
and any classical reversible transformations can be exactly
implemented.  Note that this assumption is satisfied by many standard
gate sets such as the Shor basis~\cite{Sho96FOCS} consisting of the
Hadamard, controlled-$i$-phase-shift, and Toffoli gates, and the one
consisting of the Hadamard and Toffoli gates~\cite{Shi_Toffoli,Aharonov_Toffoli}.
Hence we believe that our condition is
reasonable and not restrictive. For concreteness, we may assume the
specific gate set ${\{\textrm{Hadamard, Toffoli, NOT}\}}$ for both the original $\QCMA$ verifer and our
corresponding $\QCMA_1$ verifier. Note that,
although ${\{\textrm{Hadamard, Toffoli}\}}$ is computationally
universal~\cite{Shi_Toffoli,Aharonov_Toffoli} given a supply of both
$\ket{0}$ and $\ket{1}$ ancilla qubits, we include the NOT gate
because we assume the verifier receives all qubits initialized to
$\ket{0}$. The witness string $w$ is hardcoded into the verifier
circuit $V_{x,w}$ by initial NOT gates acting on each witness bit
whose value should be 1.

Since non-unitary and unitary quantum circuits
are equivalent in computational power~\cite{AhaKitNis98STOC},
it is sufficient to treat only unitary quantum circuits,
which justifies the above definition. However, we describe our
verification procedure using intermediate projective measurements
in the computational basis and unitary operations conditioned on the
outcome of the measurements. If we wished, we could defer all of the
measurements of the verification procedure to the end of the
computation, along the lines described on page 186
of~Ref.~\cite{NieChu00Book}. 

More specifically, one sees from Figure~\ref{Figure: QCMA_1 protocol}
in the next section
that our verification procedure involves Boolean-outcome measurements
at Steps~1,~3.1,~and~3.5. Let $b_1$, $b_{3.1}$, and $b_{3.5}$ denote
the outcomes of these measurements. Final acceptance occurs if
\begin{equation}
\label{bool}
\lnot b_1 \land (b_{3.1} \lor b_{3.5})
\end{equation}
evaluates to true. This can be determined unitarily using
Toffoli gates, as they can perform universal classical
computation. One may worry that exact unitary implementation of the
conditional operations would require the addition of
conditional-Hadamard to our gate set. However, the operations
in Steps~2~and~3 which, for conceptual clarity, we describe as being performed
only under certain measurement outcomes, can in fact be performed
unconditionally without affecting final acceptance. By
construction, the formula~(\ref{bool}) simply ignores the outcomes of
these steps in the cases that they are irrelevant.

\paragraph{Classical-Witness Quantum Merlin-Arthur Proof Systems}
%% \subsection{Classical-Merlin Quantum Merlin-Arthur Proof Systems}
%% \label{Subsection: QCMA}

This paper discusses the power of quantum Merlin-Arthur proof systems
where Merlin sends a classical witness to Arthur,
which we call \emph{QCMA proof systems}.

Formally, the class~$\QCMA(c,s)$ of problems
having such systems with completeness~$c$ and soundness~$s$
is defined as follows.
For generality, we use promise problems~\cite{ESY84} rather than languages
when defining complexity classes.

\begin{definition}
Given functions~$\function{c, s}{\Nonnegative}{[0,1]}$,
a promise problem~${A = (\Ayes, \Ano)}$ is in ${\QCMA(c,s)}$
iff there exists a polynomially bounded function~$\function{m}{\Nonnegative}{\Natural}$
and a polynomial-time quantum verifier~$V$,
who is a polynomial-time uniformly generated family
of quantum circuits~$\{V_{x,w}\}_{x\in\Binary^*, w\in\Binary^{m(\abs{x})}}$,
such that, for every input~$x$:
\begin{description}
\item[\textnormal{(Completeness)}]
if ${x \in \Ayes}$,
there exists a witness~${w \in \Binary^{m(\abs{x})}}$
with which $V$ accepts $x$
(i.e., the measurement on the output qubit of $V_{x,w}$ results in $\ket{1}$)
with probability at least ${c(\abs{x})}$,
\item[\textnormal{(Soundness)}]
if ${x \in \Ano}$,
for any witness~${w' \in \Binary^{m(\abs{x})}}$ given,
$V$ accepts $x$
%% (i.e., the measurement on the output qubit of $V_{x,w}$ results in $\ket{1}$)
with probability at most ${s(\abs{x})}$.
\end{description}
\label{Definition: QCMA(c,s)}
\end{definition}

The complexity class~$\QCMA$ is defined as follows.

\begin{definition}
A promise problem~${A = (\Ayes, \Ano)}$ is in $\QCMA$
iff $A$ is in ${\QCMA(1- \varepsilon, \varepsilon)}$
for some negligible function~$\function{\varepsilon}{\Nonnegative}{[0,1]}$.
\label{Definition: QCMA}
\end{definition}

Similarly, the class $\QCMA_1$ is defined as follows.

\begin{definition}
A promise problem~${A = (\Ayes, \Ano)}$ is in $\QCMA_1$
iff $A$ is in ${\QCMA(1, \varepsilon)}$
for some negligible function~$\function{\varepsilon}{\Nonnegative}{[0,1]}$.
\label{Definition: QCMA_1}
\end{definition}

Note that ${\QCMA = \QCMA(2/3, 1/3)}$ and ${\QCMA_1 = \QCMA(1, 1/2)}$,
since the gap between completeness and soundness can be amplified exponentially
by repeating the verification procedure.

% ---------------------------------------------------------------------------
%   Result
% ---------------------------------------------------------------------------

\section{Result}
\label{Section: Result}

Now we show that any QCMA proof system with two-sided error
can be converted into
another QCMA proof system with perfect completeness.

\begin{theorem}
${\QCMA = \QCMA_1}$.
\label{Theorem: QCMA = QCMA_1}
\end{theorem}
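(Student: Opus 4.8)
The inclusion $\QCMA_1 \subseteq \QCMA$ is immediate, so the plan is to establish $\QCMA \subseteq \QCMA_1$. Starting from a problem $A = (\Ayes, \Ano) \in \QCMA$, I would first invoke robustness of the two-sided-error class against the choice of gate set, together with error amplification, to assume the original verifier is built solely from Hadamard, Toffoli, and NOT gates and has completeness $c$ and soundness $s$ with a constant gap $c - s$. The payoff of this normalization is that every acceptance probability $p_{x,w}$ is then \emph{exactly} a dyadic rational $k_{x,w}/2^{l(\abs{x})}$ for an integer $k_{x,w}$ and a polynomially bounded $l$. I would then design a $\QCMA_1$ verifier whose witness is a pair $(w,k)$, with $k$ written in $l(\abs{x})$ bits and intended to certify $p_{x,w} = k/2^{l(\abs{x})}$.

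The heart of the construction is a \emph{base test} whose yes-case acceptance probability is \emph{exactly} $1/2$. On witness $(w,k)$ the verifier first rejects outright unless $k \ge \lceil c\,2^{l(\abs{x})}\rceil$, a purely classical reversible comparison. Otherwise it prepares a control qubit in state $\frac{1}{\sqrt 2}(\ket{0}+\ket{1})$ and, controlled on it, runs either the original verification (flagging the original acceptance) or a \emph{second test} that prepares a uniform superposition over the $2^{l(\abs{x})}$ strings and accepts iff the sampled value exceeds $k$. The second test accepts with probability exactly $1 - k/2^{l(\abs{x})}$ and, crucially, is realizable with Hadamard and reversible gates alone, with no input-length-dependent rotation. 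The overall base acceptance probability is therefore $\frac{1}{2}p_{x,w} + \frac{1}{2}\bigl(1 - k/2^{l(\abs{x})}\bigr)$. An honest Merlin on a yes-instance supplies $k = 2^{l(\abs{x})} p_{x,w}$, making this exactly $1/2$; on a no-instance every $w$ has $p_{x,w} \le s$, while passing the threshold forces $k/2^{l(\abs{x})} \ge c$, so the base acceptance is at most $\frac{1}{2} - \frac{1}{2}(c-s) < 1/2$.

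Finally I would convert ``exactly $1/2$'' into perfect completeness by a reflection-only rewinding. Writing the base test as a unitary $U$ with $U\ket{0\cdots 0} = \frac{1}{\sqrt2}\ket{0}\ket{\alpha} + \frac{1}{\sqrt2}\ket{1}\ket{\beta}$ (flag qubit written first), Arthur measures the flag, accepts on outcome $0$, and on outcome $1$ applies $U R_0 U^{\dagger}$ to the collapsed branch, where $R_0 = 2\ketbra{0\cdots0} - I$ reflects about the all-zero workspace. A short calculation shows the reject branch is carried over to acceptance with probability $4p(1-p)$, so the total acceptance is $p + 4p(1-p)^2$; at $p = 1/2$ this equals $1$, yielding perfect completeness, while for $p \le \frac{1}{2} - \frac{1}{2}(c-s)$ it stays bounded below $1$ by a constant. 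Since $U$, $U^{\dagger}$, and $R_0$ are all exactly implementable in $\{\text{Hadamard, Toffoli, NOT}\}$, standard sequential repetition then drives the soundness error down to negligible while leaving completeness at $1$. I expect the main obstacle to be precisely this demand for \emph{exact} implementability: the usual multiplicative amplitude adjustment requires a rotation through an angle depending on $\abs{x}$, which no fixed finite gate set realizes exactly. The additive second test together with the reflection-based rewinding sidesteps this entirely, and the remaining care goes into verifying that the post-rewinding soundness gap is a genuine constant so that amplification applies.
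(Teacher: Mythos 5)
Your proposal follows essentially the same route as the paper's proof: the same normalization to the gate set $\{\textrm{Hadamard, Toffoli, NOT}\}$ making every acceptance probability an exact dyadic rational $k_{x,w}/2^{l(\abs{x})}$, the same augmented witness $(w,k)$ with threshold rejection, the same additive second test (accept iff a uniformly sampled $l(\abs{x})$-bit value exceeds $k$) bringing the yes-case base probability to exactly $1/2$, the same reflection-based rewinding yielding total acceptance $p + 4p(1-p)^2$ (the paper's $p_{\acc}$), and the same soundness bound $p \leq \frac{1}{2} - \frac{c-s}{2}$, giving a constant soundness $s' = \frac{1}{2}\bigl(1-(c-s)\bigr)\bigl(1+(1+c-s)^2\bigr) < 1$ before amplification. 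One detail needs repair, and it matters precisely because exactness is the crux of the whole theorem: you describe the base test as running, \emph{controlled} on the $\frac{1}{\sqrt{2}}(\ket{0}+\ket{1})$ qubit, either the original verification or the second test. A literal controlled-$V_{x,w}$ contains controlled-Hadamard gates (as does a controlled version of the second test), and exact implementability of controlled-Hadamard in the allowed gate set is exactly what you may not presume — so your closing claim that $U$ is exactly implementable in $\{\textrm{Hadamard, Toffoli, NOT}\}$ is unjustified as stated. The fix, which is how the paper's verifier actually works and a concern the paper explicitly flags, is to apply $V_{x,w}$ to $\sfR$ and the Hadamards to $\sfS$ \emph{unconditionally}, and to condition only the final bit-flip of the accept qubit on the control qubit together with the contents of $\sfR$ or $\sfS$; this conditional flip is a classical reversible operation, hence exactly realizable with Toffoli and NOT gates, and it leaves your acceptance probability $\frac{1}{2} p_{x,w} + \frac{1}{2}\bigl(1 - k/2^{l(\abs{x})}\bigr)$ unchanged. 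With that substitution — and carrying out the monotonicity check you deferred, namely that $f(p) = p + 4p(1-p)^2$ is increasing on $[0,1/2]$ with $f(1/2) = 1$, which the paper verifies — your argument coincides with the published proof.
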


In fact, we show a more general theorem stated
below. Theorem~\ref{Theorem: QCMA = QCMA_1} is an immediate corollary. 

\begin{theorem}
For any polynomial-time computable function~$\function{c}{\Nonnegative}{[0,1]}$
and any function~$\function{s}{\Nonnegative}{[0,1]}$ satisfying
${c - s \geq 1/q}$ for some polynomially bounded function~$\function{q}{\Nonnegative}{\Natural}$,
\footnote{
  Actually, it is sufficient for our proof that
  ${c - s \geq 1/2^q}$ for some polynomially bounded function~$\function{q}{\Nonnegative}{\Natural}$.
}
%% For any functions~$\function{c, s}{\Nonnegative}{[0,1]}$ satisfying ${c>s}$,
\[
\QCMA(c,s) \subseteq \QCMA(1,s'),
\]
where ${s'=\frac{1}{2}\bigl(1-(c-s)\bigr)\bigl(1+(1+c-s)^2\bigr) < 1}$.
\label{Theorem: QCMA(c,s) is in QCMA(1,s')}
\end{theorem}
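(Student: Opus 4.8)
The plan is to begin with an arbitrary $\QCMA(c,s)$ system and, invoking the robustness of two-sided-error $\QCMA$ against the gate set, assume its verifier family $\{V_{x,w}\}$ is built only from Hadamard, Toffoli, and NOT gates. If such a circuit on input $x$ contains $l = l(\abs{x})$ Hadamard gates, then every acceptance probability is an integer multiple of $2^{-l}$; write $p_{x,w} = k'/2^{l}$ for the (instance- and witness-dependent) integer $k'$. In the new system Merlin sends a pair $(w,k)$, where $w \in \Binary^{m(\abs{x})}$ is the old witness and $k$ is an integer written using $l(\abs{x})$ bits, intended to equal $2^{l} p_{x,w}$.

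First I would have Arthur run a threshold test (Step~1): using that $c$ is polynomial-time computable, he computes $\lceil 2^{l} c(\abs{x}) \rceil$ and rejects (records $b_1 = 1$) when $k < \lceil 2^{l} c(\abs{x}) \rceil$. Otherwise he runs a base procedure $V$ that places a control qubit in $\frac{1}{\sqrt{2}}(\ket{0}+\ket{1})$ via a Hadamard and, with equal weight, runs two tests coherently: conditioned on $\ket{0}$ it applies the original verification $V_{x,w}$, which accepts with probability $p_{x,w}$; conditioned on $\ket{1}$ it Hadamard-generates the uniform superposition over $\{1,\dots,2^{l}\}$ and accepts exactly when the sampled value exceeds $k$, which happens with probability $1 - k/2^{l}$. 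The essential point is that this second test uses only Hadamard and classical reversible gates, hence is implementable exactly in the assumed gate set; this is the ``additive'' adjustment that replaces an input-length-dependent rotation. A single flag qubit records acceptance of whichever branch is active.

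A direct check then controls both directions. The honest Merlin on a yes-instance chooses $w$ with $p_{x,w} \ge c$ and sets $k = 2^{l} p_{x,w}$, so Step~1 passes and the base procedure accepts with probability exactly $\frac{1}{2} p_{x,w} + \frac{1}{2}(1 - k/2^{l}) = \frac{1}{2}$. On a no-instance, soundness of the original system gives $p_{x,w} \le s$, while any $k$ surviving Step~1 obeys $k/2^{l} \ge c$; hence the base procedure accepts with probability at most $\frac{1}{2} s + \frac{1}{2}(1-c) = \frac{1}{2}\bigl(1-(c-s)\bigr)$, which is strictly below $\frac{1}{2}$. Finally I would upgrade the base procedure to perfect completeness by Watrous's quantum rewinding~\cite{Wat09SIComp,KemKobMatVid09CC}: run $V$ and measure the flag (outcome $b_{3.1}$); if it rejects, apply $V^{\dagger}$, reflect about the all-zero initial state, reapply $V$, and measure again (outcome $b_{3.5}$); then accept iff $\lnot b_1 \land (b_{3.1} \lor b_{3.5})$. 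Working in the two-dimensional accept/reject subspace shows that a base acceptance probability $p$ yields overall acceptance probability $P(p) = p + 4p(1-p)^{2}$, so $P(1/2) = 1$ gives perfect completeness.

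The technically delicate step---and the one I expect to be the crux---is the soundness bound, since the rewinding is designed around $p = \frac{1}{2}$ but must be controlled when $p < \frac{1}{2}$. Here one checks that $P$ is nondecreasing on $[0, \frac{1}{2}]$, because $P'(p) = 1 + 4(1-p)(1-3p) \ge 0$ there; consequently every no-instance has overall acceptance probability at most $P\bigl(\tfrac{1}{2}(1-(c-s))\bigr)$, and this value is exactly $s' = \frac{1}{2}\bigl(1-(c-s)\bigr)\bigl(1+(1+c-s)^{2}\bigr)$. Rewriting $s' = 1 - \frac{1}{2}(c-s)^{2}\bigl(1+(c-s)\bigr)$ confirms $s' < 1$, which establishes the containment. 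Theorem~\ref{Theorem: QCMA = QCMA_1} then follows by applying this with a constant gap and amplifying soundness within perfect completeness, the gap $1 - s'$ being inverse-polynomial whenever $c - s \ge 1/q$. The remaining conceptual obstacle worth flagging is implementability of the reflection about the all-zero state and of the final Boolean acceptance predicate, but both reduce to multiply-controlled gates available from Hadamard, Toffoli, and NOT.
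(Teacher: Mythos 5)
Your proposal is correct and follows essentially the same route as the paper: the same dyadic-probability normalization via the Hadamard--Toffoli--NOT gate set, the same augmented witness $(w,k)$ with threshold rejection in Step~1, the same ``additive'' second test bringing the honest base acceptance probability to exactly $1/2$, the same Watrous rewinding with final predicate $\lnot b_1 \land (b_{3.1} \lor b_{3.5})$, and the same soundness analysis via monotonicity of ${f(p) = p + 4p(1-p)^2}$ on $[0,1/2]$, yielding the identical $s'$ (your rewriting ${s' = 1 - \frac{1}{2}(c-s)^2\bigl(1+(c-s)\bigr)}$ is a nice explicit confirmation that ${s'<1}$, which the paper only asserts via ${f(1/2)=1}$). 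One point needs repair, and it touches the heart of the exactness claim: as literally written, applying $V_{x,w}$ ``conditioned on $\ket{0}$'' of the control qubit (and the superposition-generating Hadamards on the counter register conditioned on $\ket{1}$) describes controlled unitaries whose exact implementation would require controlled-Hadamard gates, which are \emph{not} available exactly in the assumed gate set --- precisely the pitfall the paper devotes a paragraph to. The paper's protocol avoids it by applying $V_{x,w}$ and the Hadamards on $\sfS$ \emph{unconditionally}, and conditioning only the flip of the flag qubit $\sfO$ on the control qubit $\sfB$ together with the (computational-basis) contents of the other registers; that flip is a classical reversible operation exactly realizable with Toffoli and NOT, and it produces the same base acceptance probability ${\frac{1}{2} p_{x,w} + \frac{1}{2}\bigl(1 - k/2^{l}\bigr)}$, so with this adjustment your argument coincides with the paper's in full.
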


By taking ${c=2/3}$ and ${s=1/3}$,
Theorem~\ref{Theorem: QCMA(c,s) is in QCMA(1,s')} implies
${\QCMA(2/3,1/3) \subseteq \QCMA(1,25/27)}$,
which is sufficient to obtain Theorem~\ref{Theorem: QCMA = QCMA_1}.

The rest of this section is devoted to the proof of
Theorem~\ref{Theorem: QCMA(c,s) is in QCMA(1,s')}.

\begin{proof}[Proof of Theorem~\ref{Theorem: QCMA(c,s) is in QCMA(1,s')}]
Let ${A = (\Ayes, \Ano)}$ be in ${\QCMA(c,s)}$
and let $V$ be the verifier of the corresponding QCMA system.
Consider the quantum circuit $V_{x,w}$ of $V$
when the input is $x$ and the received witness is $w$ of ${m(\abs{x})}$ bits,
for some polynomially bounded function~$\function{m}{\Nonnegative}{\Natural}$.
Without loss of generality, by Refs.~\cite{Aharonov_Toffoli, Shi_Toffoli},
one can assume that $V_{x,w}$ consists of
only the Hadamard, Toffoli, and NOT gates,
%% only the Hadamard and Toffoli gates,
and the output is obtained by measuring the designated
output qubit in the computational basis.  Therefore, the acceptance
probability of $V_{x,w}$ is exactly expressible as
$k_{x,w}/2^{l(\abs{x})}$ for some integer~$k_{x,w}$ in $\{0,1,\ldots,2^{l(\abs{x})}\}$,
where $\function{l}{\Nonnegative}{\Natural}$ is a polynomially bounded
function such that ${l(\abs{x})}$ denotes the size of the circuit~$V_{x,w}$.

We construct a new verifier~$W$ assuring that $A$ is in ${\QCMA(1,s')}$.
Let $\sfR$ be the quantum register consisting of all the qubits used by $V_{w,x}$.
The verifier $W$ uses three more quantum registers~$\sfB$,~$\sfO$,~and~$\sfS$
in addition to $\sfR$,
where $\sfB$ and $\sfO$ are single-qubit registers,
and $\sfS$ is a quantum register of ${l(\abs{x})}$ qubits.
All the qubits in these four registers are initialized to $\ket{0}$.
The qubit in $\sfO$ is designated as the output qubit in the constructed system.
As a witness, $W$ receives binary strings~$w$~and~$k$,
where $w$ is expected to be the witness $V$ would receive in the original system,
and $k$ is an ${l(\abs{x})}$-bit string
that identifies a positive integer in $\{1,\ldots,2^{l(\abs{x})}\}$
that is expected to be $k_{x,w}$.
Here notice that we are considering a natural one-to-one correspondence
between ${l(\abs{x})}$-bit strings and integers from $1$ to $2^{l(\abs{x})}$
(rather than from $0$ to ${2^{l(\abs{x})} - 1}$),
for $k_{x,w}$ cannot be zero but can be $2^{l(\abs{x})}$ in the yes-instance case.
%% so that $k/2^{l(\abs{x})}$ exactly equals
%% the acceptance probability of $V_{w,x}$.
$W$ immediately rejects if $k$ viewed as an integer is less than
${c(\abs{x}) \cdot 2^{l(\abs{x})}}$.

Then $W$ applies the Hadamard transformations
over all qubits in registers~$\sfB$~and~$\sfS$,
and applies the original verification circuit~$V_{x,w}$ over the qubits in $\sfR$.
$W$ accepts either when $\sfB$ contains $0$
and the content of $\sfR$ would result in acceptance in the original system,
or when $\sfB$ contains $1$ and the content of $\sfS$ viewed as an integer
expressed by an ${l(\abs{x})}$-bit string is greater than $k$
(the qubit in $\sfO$, which is the output qubit of the constructed system,
is flipped to $\ket{1}$ in these two cases). 
Otherwise $W$ continues by performing the quantum rewinding procedure.
The precise description of the protocol of $W$ is given in Figure~\ref{Figure: QCMA_1 protocol}.
It is easy to see that this protocol is exactly implementable using
only the Hadamard and classical reversible transformations
(the protocol includes intermediate measurement, which can be
postponed until the very end of the protocol
via standard technique that only uses classical reversible transformations).

\begin{figure}[t!]
\begin{algorithm*}{Verifier's Protocol for Achieving Perfect Completeness}
\begin{step}
\item
  Receive an ${m(\abs{x})}$-bit string~$w$
  and an integer $k$ in $\{1,\ldots,2^{l(\abs{x})}\}$
  expressed by an ${l(\abs{x})}$-bit string as witness.
  Reject if ${k/2^{l(\abs{x})}<c(\abs{x})}$. 
\item
  Perform the following unitary transformation~$Q$
  over the qubits in ${(\sfB, \sfO, \sfR, \sfS)}$.
  \begin{step}
  \item
    Apply the Hadamard transformations to all the qubits in $\sfB$ and $\sfS$, 
    and apply $V_{x,w}$ to the qubits in $\sfR$.
  \item
    Apply the bit-flip to the qubit in $\sfO$
    either when $\sfB$ contains $0$
    and the content of $\sfR$ would result in acceptance in the original system,
    or when $\sfB$ contains $1$
    and the content of $\sfS$
    viewed as an integer in $\{1,\ldots,2^{l(\abs{x})}\}$
    is greater than $k$.
  \end{step}
\item
  Do the following steps (quantum rewinding):
  \begin{step}
    \item
      Accept if $\sfO$ contains $1$, and continue otherwise.
    \item
      Invert Step 2.
      That is, apply $\conjugate{Q}$ to ${(\sfB, \sfO, \sfR, \sfS)}$.
    \item
      Perform the phase-flip (i.e., multiply $-1$ in phase)
      if all the qubits in ${(\sfB, \sfO, \sfR, \sfS)}$ are in state~$\ket{0}$.
    \item
      Perform the same operations as in Step 2.
      That is, apply $Q$ to ${(\sfB, \sfO, \sfR, \sfS)}$.
    \item
      Accept if $\sfO$ contains $1$, and reject otherwise.
  \end{step}
\end{step}
\end{algorithm*}
\caption{Verifier's protocol for achieving perfect completeness. Note
  that the conditional phase flip can be exactly achieved using
  ${\{\textrm{Hadmard, Toffoli, NOT}\}}$ by preparing
  an ancilla qubit $\ket{0}$ into the state $\frac{1}{\sqrt{2}} \left(
  \ket{0} - \ket{1} \right)$ by the application of a NOT followed by a
  Hadamard, and then performing a conditional NOT on that qubit.}
\label{Figure: QCMA_1 protocol}
\end{figure}

Now we analyze the protocol.
Our analysis is similar to the proof of Lemma~3.3 in Ref.~\cite{KemKobMatVid09CC}
(which is based on the ideas in Refs.~\cite{MarWat05CC,Wat09SIComp}).

Let $\Pi_{\init}$ be the projection onto the all-zero state
(i.e., the state in which all the qubits in ${(\sfB, \sfO, \sfR, \sfS)}$
are in state~$\ket{0}$, which is denoted by $\ket{0}_{(\sfB, \sfO, \sfR, \sfS)}$)
and let $\Pi_{\acc}$ be the projection onto states
in which the qubit in $\sfO$ is in state~$\ket{1}$.
Let $Q$ be the unitary transformation induced by the actions in Step~2.
Conditioned on $W$ \emph{not} rejecting in Step~1,
the probability of being accepted in Step~3.1 can be written as
${p_{x,w,k}=\bignorm{\Pi_{\acc}Q \ket{0}_{(\sfB, \sfO, \sfR, \sfS)}}^2}$.
This implies that the matrix
${M=\Pi_{\init} \conjugate{Q} \Pi_{\acc} Q \Pi_{\init}}$
is expressed as
\[
M
=
\Pi_{\init} \conjugate{Q} \Pi_{\acc} Q \Pi_{\init}
=
p_{x,w,k}\ketbra{0}_{(\sfB, \sfO, \sfR, \sfS)}
=
p_{x,w,k}\Pi_{\init},
\]
since 
\begin{align*}
\Pi_{\init} \conjugate{Q} \Pi_{\acc} Q \Pi_{\init}
&=
\ket{0}_{(\sfB, \sfO, \sfR, \sfS)}
\bigl(
  \bra{0}_{(\sfB, \sfO, \sfR, \sfS)} \conjugate{Q} \Pi_{\acc} Q \ket{0}_{(\sfB, \sfO, \sfR, \sfS)}
\bigr)
\bra{0}_{(\sfB, \sfO, \sfR, \sfS)}
\\
&=
\bignorm{
  \Pi_{\acc} Q \ket{0}_{(\sfB, \sfO, \sfR, \sfS)}
}^2
\ketbra{0}_{(\sfB, \sfO, \sfR, \sfS)}.
\end{align*}

Define the unnormalized states
$\ket{\phi_0}$, $\ket{\phi_1}$, $\ket{\psi_0}$, and $\ket{\psi_1}$
by 
\begin{align*}
  \ket{\phi_0}
  &=
  \Pi_{\acc} Q \ket{0}_{(\sfB, \sfO, \sfR, \sfS)},
  &
  \ket{\phi_1}
  &=
  \Pi_{\rej} Q \ket{0}_{(\sfB, \sfO, \sfR, \sfS)},
%%   \\
  &
  \ket{\psi_0}
  &=
  \Pi_{\init} \conjugate{Q} \ket{\phi_0},
  &
  \ket{\psi_1}
  &=
  \Pi_{\illegal} \conjugate{Q} \ket{\phi_0},
\end{align*}
where $\Pi_{\illegal}$ is the projection onto states orthogonal to
$\ket{0}_{(\sfB, \sfO, \sfR, \sfS)}$
and
$\Pi_{\rej}$ is that onto states in which the qubit in $\sfO$ is in state~$\ket{0}$.

First, we analyze the acceptance probability of $W$
when the claimed $k$ satisfies ${k \geq c(\abs{x}) \cdot 2^{l(\abs{x})}}$
(i.e., when $W$ does not reject in Step~1).
Clearly, $W$ accepts in Step~3.1 with probability~$p_{x,w,k}$.
We analyze the probability of being accepted in Step~3.5.
For this purpose, 
it suffices to follow the changes of the unnormalized state~%%
${
  \ket{\phi_1}
  =
  \Pi_{\rej} Q \ket{0}_{(\sfB, \sfO, \sfR, \sfS)}
}$
during the protocol when $W$ continues in Step~3.1.
Since 
\[
\ket{\psi_0}
=
\Pi_{\init} \conjugate{Q} \Pi_{\acc} Q \ket{0}_{(\sfB, \sfO, \sfR, \sfS)}
=
\Pi_{\init} \conjugate{Q} \Pi_{\acc} Q \Pi_{\init} \ket{0}_{(\sfB, \sfO, \sfR, \sfS)}
=
M \ket{0}_{(\sfB, \sfO, \sfR, \sfS)}
=
p_{x,w,k} \ket{0}_{(\sfB, \sfO, \sfR, \sfS)},
\]
the state\footnote{The norm of this state is
  $\sqrt{1-p_{x,w,k}}$. Alternatively, one could carry out the
  analysis with a conventional normalized state, in which case
  one would obtain the \emph{conditional} probability of accepting in
  Step 3.5 given that Step 3.1 does not accept. This conditional probability
  is one in the case where ${p_{x,w,k}=1/2}$.} just after Step~3.2 is
\begin{align*}
\conjugate{Q} \Pi_{\rej} Q \ket{0}_{(\sfB, \sfO, \sfR, \sfS)}
&=
\ket{0}_{(\sfB, \sfO, \sfR, \sfS)} - \conjugate{Q} \ket{\phi_0}
\\
&=
\frac{1}{p_{x,w,k}} \ket{\psi_0} - \bigl( \ket{\psi_0} + \ket{\psi_1} \bigr)
\\
&=
\frac{1-p_{x,w,k}}{p_{x,w,k}} \ket{\psi_0} - \ket{\psi_1}.
\end{align*}
As ${\Pi_{\init} \ket{\psi_0} = \ket{\psi_0}}$ and ${\Pi_{\init} \ket{\psi_1} = 0}$, 
the controlled phase-flip in Step~3.3 changes the state to
\begin{align*}
-\frac{1-p_{x,w,k}}{p_{x,w,k}} \ket{\psi_0} - \ket{\psi_1}
&=
-\frac{1-2p_{x,w,k}}{p_{x,w,k}} \ket{\psi_0} - \bigl( \ket{\psi_0}+\ket{\psi_1} \bigr)
\\
&= -(1-2p_{x,w,k}) \ket{0}_{(\sfB, \sfO, \sfR, \sfS)} - \conjugate{Q} \ket{\phi_0}.
\end{align*}
Using ${Q \ket{0}_{(\sfB, \sfO, \sfR, \sfS)} = \ket{\phi_0} + \ket{\phi_1}}$,
one can see that the state just after Step~3.4 is
\[
-(1-2p_{x,w,k}) Q \ket{0}_{(\sfB, \sfO, \sfR, \sfS)} - \ket{\phi_0}
=
-(2-2p_{x,w,k}) \ket{\phi_0} - (1-2p_{x,w,k}) \ket{\phi_1}.
\]
Thus, the probability of being accepted in Step~3.5 is
\[
(2-2p_{x,w,k})^2 \bignorm{\ket{\phi_0}}^2
=
4 p_{x,w,k} (1-p_{x,w,k})^2.
%% 4 (1-p_{x,w,k})^2 p_{x,w,k}.
\] 
Hence, the acceptance probability~$p_{\acc}$ of $W$
when the claimed $k$ satisfies ${k \geq c(\abs{x}) \cdot 2^{l(\abs{x})}}$
is given by
\[
p_{\acc}
=
p_{x,w,k} + 4 p_{x,w,k} (1-p_{x,w,k})^2.
%% =
%% p_{x,w,k} (4 p_{x,w,k}^2 - 8 p_{x,w,k} + 5).
%% p_{\acc}=p_{x,w,k}+4(1-p_{x,w,k})^2p_{x,w,k}.
\]

Now we calculate ${p_{x,w,k}=\bignorm{\Pi_{\acc}Q \ket{0}_{(\sfB, \sfO, \sfR, \sfS)}}^2}$.
Notice that  
\begin{align*}
Q \ket{0}_{(\sfB, \sfO, \sfR, \sfS)}
=
\frac{1}{\sqrt{2^{l(\abs{x})+1}}}
\Bigl(
&
  \ket{0}_{\sfB} \ket{0}_{\sfO}
  \bigl(
    \ket{0} \ket{\chi_0}
  \bigr)_{\sfR}
  \sum_{z \in \{1, \ldots, 2^{l(\abs{x})}\}} \ket{z}_{\sfS}
\\
&
  +
  \ket{0}_{\sfB} \ket{1}_{\sfO}
  \bigl(
    \ket{1}\ket{\chi_1}
  \bigr)_{\sfR}
  \sum_{z \in \{1, \ldots, 2^{l(\abs{x})}\}} \ket{z}_{\sfS}
\\
&
  +
  \ket{1}_{\sfB} \ket{0}_{\sfO}
  \bigl(
    \ket{0} \ket{\chi_0} + \ket{1} \ket{\chi_1}
  \bigr)_{\sfR}
  \sum_{z \in \{1, \ldots, k\}} \ket{z}_{\sfS}
\\
&
  +
  \ket{1}_{\sfB} \ket{1}_{\sfO}
  \bigl(
    \ket{0} \ket{\chi_0} + \ket{1} \ket{\chi_1}
  \bigr)_{\sfR}
  \sum_{z \in \{k+1, \ldots, 2^{l(\abs{x})}\}} \ket{z}_{\sfS}
\Bigr),
\end{align*}
where we denote the state~$V_{x,w}\ket{0}_{\sfR}$
just before the final measurement in the original system by
${
  \ket{0}\ket{\chi_0} + \ket{1} \ket{\chi_1},
%% \bigl(
%%   \ket{1}\ket{\chi_1} + \ket{0} \ket{\chi_0})_{\sfR}
%% \bigr),
}$
assuming that the first qubit in $\sfR$ was the output qubit in the original system.
Since ${\bignorm{\ket{\chi_1}}^2 = k_{x,w}/2^{l(\abs{x})}}$, one can see that 
\[
p_{x,w,k}
=
\frac{1}{2} \cdot \frac{k_{x,w}}{2^{l(\abs{x})}}
+
\frac{1}{2} \cdot \frac{2^{l(\abs{x})}-k}{2^{l(\abs{x})}}
=
\frac{1}{2} - \frac{1}{2^{l(\abs{x})+1}}(k-k_{x,w}).
\]

Now we are ready to verify the completeness and soundness of the constructed protocol.

For the completeness, 
one can take $w$ to be any string that achieves
${k_{x,w}/2^{l(\abs{x})} \geq c(\abs{x})}$
%% $\max_{w\in\Binary^{m(\abs{x})}} (k_{x,w}/2^{l(\abs{x})})$
(recall that
the acceptance probability of $V_{x,w}$ is $k_{x,w}/2^{l(\abs{x})}$),
and $k$ to be $k_{x,w}$ for the chosen $w$. 
Then clearly $W$ does not reject in Step~3.1
and we have ${p_{x,w,k}=1/2}$, which implies ${p_{\acc}=1}$. 
Thus, $W$ accepts $x$ with certainty.

For the soundness, note that one has only to consider the case
where ${k/2^{l(\abs{x})} \geq c(\abs{x})}$,
as otherwise $W$ rejects with certainty in Step~3.1.
Since for any $w$ the acceptance probability~$k_{x,w}/2^{l(\abs{x})}$ of $V_{x,w}$
is at most ${s(\abs{x})}$, 
\[
p_{x,w,k}
\leq
\frac{1}{2}
-
\frac{1}{2^{l(\abs{x})+1}}
  \bigl(
    c(\abs{x}) \cdot 2^{l(\abs{x})} - s(\abs{x}) \cdot 2^{l(\abs{x})}
  \bigr)
=
\frac{1}{2} - \frac{c(\abs{x})-s(\abs{x})}{2}.
\]
Noting that the function~${f(p) = p + 4p(1-p)^2}$
%% Noting that the function~${f(p) = p(4p^2 - 8p + 5)}$
is monotone increasing over $[0,1/2]$ and ${f(1/2)=1}$,
one can see that $p_{\acc}$ is at most
\begin{align*}
f \biggl(
    \frac{1}{2}-\frac{c(\abs{x})-s(\abs{x})}{2}
  \biggr)
&=
\frac{1}{2}
-
\frac{c(\abs{x})-s(\abs{x})}{2}
+
4 \biggl(
    \frac{1}{2} - \frac{c(\abs{x})-s(\abs{x})}{2}
  \biggr)
  \biggl(
    \frac{1}{2} + \frac{c(\abs{x})-s(\abs{x})}{2}
  \biggr)^2
\\
&=
\frac{1}{2}
\Bigl(
  1 - \bigl( c(\abs{x})-s(\abs{x}) \bigr)
\Bigr)
%% \frac{1 - \bigl( c(\abs{x})-s(\abs{x}) \bigr) }{2}
\Bigl( 1 + \bigl(1 + c(\abs{x})-s(\abs{x}) \bigr)^2 \Bigr),
\end{align*}
which is smaller than ${f(1/2) = 1}$.
Thus the soundness follows, which completes the proof.
%% This means that the acceptance probability of $W$ is at most $\frac{1-(c-s)}{2}(1+(1+c-s)^2)$. 
\end{proof}

% ---------------------------------------------------------------------------
%   Concluding Remarks
% ---------------------------------------------------------------------------

\section{Concluding Remarks}
\label{Section: Concluding Remarks}

This paper has proved that ${\QCMA=\QCMA_1}$ holds
under any gate set with which the Hadamard and arbitrary classical
reversible transformations can be exactly implemented.
As already mentioned, this result is not quantumly relativizing.
It should be noted, however, that it is classically relativizing
(i.e., ${\QCMA^A=\QCMA_1^A}$ for any classical oracle $A$).
Here we assume the standard model of classical oracles
in computational complexity theory,
in particular that the answer of $A$ for any query is deterministic.
This fact can be easily seen:
for any specific choice of $A$, the acceptance probability of
the verifier can still be represented in the form of $k/2^{l(|x|)}$
since $A$ is deterministic.

A natural question to ask is whether one can extend our argument
to the $\QMA$ case to show that ${\QMA=\QMA_1}$.
There seem to be at least two obstacles for this.
First, the maximum acceptance probability of the verifier
(even in the honest Merlin case)
cannot be expressed with a polynomial number of bits in general.
This is because the maximum acceptance probability in the QMA system
corresponds to the largest eigenvalue of a certain appropriate matrix,
which might only be describable as a zero of some polynomial with exponentially
many terms.
Second, even if one knew its probability as an algebraic number,
it is not easy to boost the probability to one via amplitude amplification or quantum rewinding
-- without an explicit description of the initial state we do not see a way to perform 
a perfect reflection about the initial state, which seems to be necessary
(see, \emph{e.g.}, Ref.~\cite{NWZ09}).

% ---------------------------------------------------------------------------
%   Acknowledgements
% ---------------------------------------------------------------------------

\subsection*{Acknowledgements}

The authors thank Jake~Taylor, Michele~Mosca, and Pawel~Wocjan for
useful discussions, and an anonymous reviewer for helpful comments on
the earlier version of this paper.  Part of this work was performed
while SJ was at the Institute for Quantum Computation at Caltech.  He
gratefully acknowledges the support he received from the Sherman
Fairchild Foundation and NSF~grant~PHY-0803371 and thanks the Slovak
Academy of Sciences for hospitality.  HK is partially supported by the
Grant-in-Aid for Scientific Research~(B)~No.~21300002 of the Japan
Society for the Promotion of Science.  DN gratefully acknowledges
support from the Slovak Research and Development Agency under the
contract No.~LPP-0430-09, from the project APVV-0646-10, and European
project Q-ESSENCE.  HN is partially supported by the Grant-in-Aid for
Scientific Research~(A)~Nos.~21244007 and 23246071 of the Japan
Society for the Promotion of Science and the Grant-in-Aid for Young
Scientists~(B)~No.~22700014 of the Ministry of Education, Culture,
Sports, Science and Technology in Japan.

\clearpage

% ---------------------------------------------------------------------------
%   References
% ---------------------------------------------------------------------------

\bibliographystyle{alpha}
\bibliography{JKNN12v13arXiv}

\end{document}